\documentclass[letterpaper,10pt,conference]{ieeeconf}

\IEEEoverridecommandlockouts

\usepackage{graphicx}
\usepackage{epstopdf}
\usepackage{float}
\usepackage{stfloats}
\fnbelowfloat

\usepackage{amsmath,amssymb}
\usepackage{mathrsfs}
\usepackage{bm}

\usepackage{amsthm}
\usepackage{etoolbox}

\usepackage{acronym}
\usepackage{paralist}
\usepackage{multicol}
\usepackage{tikz}
\usepackage[caption=false,font=footnotesize]{subfig}

\usepackage[dvipsnames]{xcolor}

\usepackage{macros}

\makeatletter
\newcommand{\ClearExistingTheoremEnvironment}[1]{%
  \expandafter\let\csname #1\endcsname\@undefined
  \expandafter\let\csname end#1\endcsname\@undefined
  \expandafter\let\csname c@#1\endcsname\@undefined
  \expandafter\let\csname the#1\endcsname\@undefined
  \expandafter\let\csname p@#1\endcsname\@undefined
}
\ClearExistingTheoremEnvironment{problem}
\ClearExistingTheoremEnvironment{remark}
\ClearExistingTheoremEnvironment{Remark}
\ClearExistingTheoremEnvironment{theorem}
\ClearExistingTheoremEnvironment{Theorem}
\ClearExistingTheoremEnvironment{lemma}
\ClearExistingTheoremEnvironment{corollary}
\ClearExistingTheoremEnvironment{proposition}
\ClearExistingTheoremEnvironment{definition}
\ClearExistingTheoremEnvironment{assumption}
\let\ClearExistingTheoremEnvironment\@undefined
\makeatother

\newtheorem{problem}{Problem}
\AtEndEnvironment{problem}{\hfill$\diamond$}

\AtEndEnvironment{remark}{\hfill$\blacklozenge$}

\AtEndEnvironment{Remark}{\hfill$\blacklozenge$}

\newtheorem{theorem}{Theorem}

\newtheorem{lemma}{Lemma}

\AtEndEnvironment{corollary}{\hfill$\vartriangle$}

\newtheorem{proposition}{Proposition}

\newtheorem{definition}{Definition}
\AtEndEnvironment{definition}{\hfill$\bullet$}

\newtheorem{assumption}{Assumption}
\AtEndEnvironment{assumption}{\hfill$\circ$}

\newcommand{\bequ}{\begin{eqnarray}}
\newcommand{\eequ}{\end{eqnarray}}

\def\BibTeX{{\rm B\kern-.05em{\sc i\kern-.025em b}\kern-.08em
    T\kern-.1667em\lower.7ex\hbox{E}\kern-.125emX}}

\usepackage{xurl}
\usepackage{cite}
\makeatletter
\let\NAT@parse\undefined
\makeatother
\usepackage[colorlinks=true,allcolors=blue]{hyperref}

\begin{document}

\title{\color{Black}
Path Invariance of a Quadrotor System under Cyber Attacks with Theoretical Guarantees
\thanks{This research has been supported by NJIT's startup funds.}
\thanks{Hamza Mahmood$^*$ and Adeel Akhtar$^*$ are with the
Department of Mechanical \& Industrial Engineering at the New Jersey Institute of Technology, Newark, NJ 07102 (email: \texttt{\{hm576, adeel.akhtar\}@njit.edu}). Usman Ali$^\dagger$ is with the Faculty of Computing, Engineering and Media at the De Montfort University, Leicester, LE1 9BH, UK (email: \texttt{usman.ali@dmu.ac.uk})
}}

\author{Hamza Mahmood$^*$ \and Usman Ali$^{\dag}$ \and Adeel Akhtar$^*$
}

\maketitle

\begin{abstract}
This paper presents a path-following controller for a quadrotor system to guarantee safe maneuvers, in terms of forward path invariance, in the presence of cyber-physical attacks. We assume that an adversarial agent can control any one of the rotors through a false data injection (FDI) type of attack. A feedback controller is designed using transverse feedback linearization which guarantees that the system follows a class of smooth curves under FDI attacks. Our proposed controller is computationally efficient, with a closed-form analytical expression, that not only mitigates the effect of bounded malicious signal but also ensures mission success. We provide theoretical guarantees of forward path invariance under FDI attacks with realistic assumptions and demonstrate the effectiveness of our approach through simulation.
\end{abstract}

\section{Introduction}
\label{sec:Introduction}

Unmanned aerial vehicles (UAVs) such as quadrotors have important applications in many areas such as search and rescue, emergency management, package delivery, surveillance, agriculture, and drone photography. In all these applications, a typical mission involves a high-level path planning task of finding an obstacle-free path and then a low-level control task of designing a control policy to track the planned trajectory in a safe manner~\cite{Lav06}. One way of ensuring safety is to render the path forward-invariant~\cite{NieFulMag10,AkhNieWas2015}. Achieving invariance (safe maneuver), however, in the presence of cyber-physical attacks is a challenging problem. The design of a controller that can guarantee forward path invariance in the presence of adversarial attacks is the main focus of this paper.

The problem of trajectory tracking for quadrotors has been considered in~\cite{LiLiuWen2024} in the presence of cyber-physical attacks. However, as highlighted in~\cite{AguHesKok05}, typical trajectory tracking solutions have performance limitations; they cannot guarantee that a system stays on a trajectory once it reaches it. These limitations are overcome in~\cite{AkhWasNie2013,AkhWasNie2012}, where the authors provided an alternative path-following controller that ensures that the system sticks to the path, once it hits the trajectory, by providing guarantees on forward path invariance, and that has essentially better safety properties. These works, however, do not consider adversarial attacks. In this paper, we aim to design a controller that can ensure safety during path-following using forward path-invariance in the presence of cyber-physical attacks.

A problem somewhat similar, but fundamentally different, is considered in~\cite{GargSanAlv2022} where the authors provide a set invariance argument but it is \textit{not} path invariance. Specifically, their controller guarantees that the system stays within a bound of the desired point or trajectory, while in our work, we want to ensure that once the system reaches the curve, it never leaves the curve. Hence, the goal of our controller design is to make the given path asymptotically stable and forward invariant.

In this paper, we consider a particular type of cyber-physical attack in UAVs, namely false data-injection (FDI) attack~\cite{XuHonWe2024,GaoHaoYan2024}, where we consider a quadrotor system with only one vulnerable input and assume that this actuator is hijacked in the sense that the adversarial agent overrides the control input and injects a malicious signal into the actuator. Our goal is then to design a control signal for the remaining actuators to ensure a safe maneuver, i.e., follow a given curve with little or no deviation. Our method does not assume an \textit{a priori} attack model; rather, the attack is detected (for various attack detection mechanisms in CPSs, see~\cite{GonYanWu2023} and references therein) and the instantaneous resulting motor speed is measured. The main theme of this paper is thus to design a controller that not only mitigates the effect of the malicious signal, but also fulfills some mission objectives, especially in terms of safety.

We treat the path-following problem under false data injection (PFP-FDI) as a set stabilization problem~\cite{NieMag04}, where the given path is represented as a set, and then we stabilize the maximal control invariant subset of the path~\cite{NieFulMag10} and render it as an invariant set in the presence of adversarial attack of FDI type. Our solution is based on \textit{transverse feedback linearization (TFL)}~\cite{DsouNie2023}. The problem is challenging because the quadrotor system under adversarial attack is not differentially flat and subsequently has uncontrolled internal dynamics. Despite that, we show, by designing computationally efficient closed-form controllers, that the quadrotor system with one rotor fully controlled by an adversarial agent follows the desired path in a strict sense.

While the details of our controller design and theoretical results can be found in the paper, we summarize our main contributions below:
{

\begin{enumerate}
    \item A well-defined vector relative degree of a quadrotor system in case of one rotor fully controlled by an adversarial agent (Lemma~\ref{lem:welldefined_deg});
    \item A diffeomorphism that geometrically transforms the quadrotor system into a partial linear system (Lemma~\ref{lemma:Diffeo});
    \item Asymptotic stability and invariance of a given path in the presence of adversarial attacks on a single motor (Theorem~\ref{theo:main_result}).
\end{enumerate}
}

{
\subsection{Notation}
\label{subsec:Notation}
We represent a vector $x \in \Real^{n}$ with components $x_1, x_2, \ldots, x_n$ by $x = \col(x_1,\ldots,x_n)$. For Euclidean space, we denote the inner product and norm by $\langle \cdot,\cdot \rangle$ and $\| \cdot \|$, respectively. We abbreviate the trigonometric functions as $c_{\theta} \eqdef \cos\theta,\hspace{1mm} s_{\theta} \eqdef \sin\theta,\hspace{1mm} t_{\theta} \eqdef \tan\theta$. The point-to-set distance of a  point $p \in \Real^n$ from a nonempty set $A\subseteq  \Real^n$ is given by $\norm{p}_A \eqdef \inf_{q \in A}\|p - q\|$. If $h : A \rightarrow B$ and $s: B\rightarrow C$ are two maps, then their composition is denoted by $s\circ h : A \rightarrow C$. The derivative of a $C^{1}$ map $f:\Real^n \rightarrow \Real^m$ at a point $p\in \Real^n$ is written as  $df_p\eqdef \left. \frac{\partial f}{\partial x}\right|_{x = p}$.
If $f,g: \Real^n \rightarrow \Real^n$ and $\lambda: \Real^n \rightarrow \Real$ are smooth maps, the Lie derivatives are denoted by $L_g^0\lambda \eqdef \lambda$, $L_g^k\lambda \eqdef L_g(L_g^{k-1}\lambda)$, $L_gL_f\lambda\eqdef L_g(L_f\lambda)$.
}

\section{Preliminaries}
\label{sec:Some_Preliminaries}
{

In this section, we define a class of parametric curves that can be represented as a zero-level set of a smooth function and review some elementary concepts related to curves; for details, see~\cite{pressley2010}.
\begin{definition}[Regular parameterized curve]
    A parametric curve $\sigma: I\subseteq \Real \to \Real^3$, for some open interval $I$, is called a regular curve if for each $\lambda \in I$, $\sigma^{\prime}(\lambda) \neq 0$, where $\sigma^{\prime}(\lambda)\eqdef {d\sigma}/{d\lambda}$.
\end{definition}
The following proposition shows that any regular curve has a unit-speed parameterization~\cite[Proposition 1.3.6]{pressley2010}.
\begin{proposition}[Unit-speed parameterization of a regular curve]
A parameterized curve has a unit-speed reparameterization if and only if it is
regular.
\end{proposition}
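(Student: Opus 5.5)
We prove the two directions separately. The \emph{if} direction builds a unit-speed reparameterization from the arc-length function. The \emph{only if} direction uses the chain rule. A reparameterization of $\sigma: I \to \Real^3$ means a curve $\tilde\sigma = \sigma\circ\phi$ on an open interval $\tilde I$, where $\phi:\tilde I\to I$ is a smooth bijection whose inverse $\phi^{-1}$ is also smooth.

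\textbf{Only if.} Suppose $\tilde\sigma(s) = \sigma(\phi(s))$ has unit speed. Since $\psi\eqdef\phi^{-1}$ is smooth and $\psi(\phi(s)) = s$, the chain rule gives $\psi'(\phi(s))\,\phi'(s) = 1$, so $\phi'(s)\neq 0$ for every $s$. Differentiating $\tilde\sigma$ gives $\tilde\sigma'(s) = \sigma'(\phi(s))\,\phi'(s)$, and therefore $1 = \|\tilde\sigma'(s)\| = \|\sigma'(\phi(s))\|\,|\phi'(s)|$. This forces $\sigma'(\phi(s))\neq 0$. Because $\phi$ is onto $I$, this holds at every $\lambda\in I$, so $\sigma$ is regular.

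\textbf{If.} Suppose $\sigma$ is regular. Fix $\lambda_0\in I$ and define the arc length
\begin{equation*}
s(\lambda)\eqdef\int_{\lambda_0}^{\lambda}\|\sigma'(u)\|\,du .
\end{equation*}
By the fundamental theorem of calculus, $s'(\lambda)=\|\sigma'(\lambda)\|>0$. Hence $s$ is strictly increasing, and its image $\tilde I = s(I)$ is an open interval, being the continuous strictly monotone image of an open interval.

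The key step is smoothness of $s$, and this is exactly where regularity is needed. The map $v\mapsto\|v\|$ is smooth only on $\Real^3\setminus\{0\}$. Since $\sigma'$ never vanishes, the composition $\lambda\mapsto\|\sigma'(\lambda)\|$ is smooth, and so is its integral $s$. Without regularity, $s$ could fail to be differentiable past first order wherever $\sigma'=0$.

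By the inverse function theorem in one variable, $s$ has a smooth inverse $\phi\eqdef s^{-1}:\tilde I\to I$, with $\phi'(t)=1/s'(\phi(t))$. Set $\tilde\sigma=\sigma\circ\phi$. Then
\begin{equation*}
\tilde\sigma'(t)=\sigma'(\phi(t))\,\phi'(t)=\frac{\sigma'(\phi(t))}{\|\sigma'(\phi(t))\|},
\end{equation*}
which has norm $1$ for every $t\in\tilde I$. So $\tilde\sigma$ is a unit-speed reparameterization of $\sigma$.
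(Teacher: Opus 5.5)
Your proof is correct and follows essentially the same route as the source the paper relies on. The paper gives no proof of its own and cites Pressley, Proposition 1.3.6, whose argument is exactly yours: the chain rule for the ``only if'' direction, and inverting the smooth, strictly increasing arc-length function via the one-variable inverse function theorem for the ``if'' direction. The only superfluous step is deriving $\phi'\neq 0$, since $1=\|\sigma'(\phi(s))\|\,|\phi'(s)|$ already forces $\sigma'(\phi(s))\neq 0$ on its own.
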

 This means that every regular curve $\sigma$ has a parameterization corresponding to which we have $\|\sigma^\prime(\cdot)\| \equiv 1$. In the sections to follow, we shall always assume that a regular curve is unit-speed parameterized.
\begin{definition}[$L$-periodic curves]
\label{def:L_periodic_curves}
    Let $\sigma: \Real \to \Real^3$ be a parametric curve and let $L\in\Real$. The curve $\sigma$ is $L$-periodic if $\sigma(\lambda+ L) = \sigma(\lambda)$ for all $\lambda\in\Real$. If $\sigma$ is not constant and is $L$-periodic for some $L\neq 0$, then $\sigma$ is said to be closed.
\end{definition}

}

\section{Mathematical Model of a Quadrotor}
\label{sec:Mathematical_Model_Quadrotor}
{
In this paper, we consider a standard quadrotor model given in~\cite{AkhWasNie2012}. Let us consider two right-handed orthonormal frames in $\Real^3$, denoted by $\mc{I} = \set{\Vec{i}_1,\Vec{i}_2,\Vec{i}_3}$ and $\mc{B} = \set{\Vec{b}_1,\Vec{b}_2,\Vec{b}_3}$, called the inertial and body-fixed reference frames, respectively. The inertial frame $\mc{I}$ is fixed, i.e., it neither translates nor rotates. The body-fixed frame, as the name suggests, is attached to the center of gravity of the quadrotor body. The orientation of the quadrotor, also known as \textit{attitude}, is the orientation of the frame $\mc{B}$ with respect to the inertial frame $\mc{I}$. The set of all possible orientations is given by
$
\SO3 \eqdef \set{R\in\Real^{3\times 3}: R^\top R = I = R R^{\top}, \det(R) = +1},
$
which constitutes a Lie group, i.e., $\SO3$ is a smooth manifold and forms a group under matrix multiplication. The elements of $\SO3$ are called rotation matrices. One possible way to represent rotation matrices is by the three \textit{ZYX Euler angles}, namely yaw ($\psi \in \Real$), pitch ($\theta \in \Real$), and roll ($\phi \in \Real$). To obtain $\mc{B}$, we rotate $\mc{I}$

using the rotation matrix
\begin{equation}
\label{eq:roation_matrix_product}
    R^{i}_{b}(\Phi) = \begin{bmatrix}
    c_\psi & -s_\psi & 0 \\
    s_\psi & c_\psi & 0 \\
    0 & 0 & 1
\end{bmatrix}
\begin{bmatrix}
    c_\theta & 0 & s_\theta \\
        0 & 1 & 0 \\
    -s_\theta &0 &  c_\theta
\end{bmatrix}
\begin{bmatrix}
    1 & 0 & 0\\
    0 & c_\phi & -s_\phi\\
    0 & s_\phi & c_\phi
\end{bmatrix},
\end{equation}
where the three Euler angles are denoted by $\Phi \eqdef (\phi,\theta,\psi)$.
We define the domain of Euler angles with $\mc{E}_D \eqdef \set{ \Phi = (\phi,\theta,\psi) \in (-\pi,\pi)\times(-\pi,\pi)\times(-\pi,\pi): \cos \theta > 0}$. It is well-known that the map $R^{i}_{b}: \mc{E}_D \to R^{i}_{b}(\mc{E}_D)$ is a diffeomorphism. The inverse map $(R^{i}_{b})^{-1}$ is a coordinate chart of the Lie group $\SO3$. While this representation is unique, it is not global, with singularities at angles $\theta = \pm \pi/2$. We define $Y(\Phi)$ and $Y^{-1}(\Phi)$ as
\begin{equation}
    \label{eq:DCM}
    Y(\Phi) = \begin{bmatrix}
    1 & 0 & -s_\theta\\
    0 & c_\phi & c_\theta s_\phi\\
    0 & -s_\phi & c_\theta c_\phi
\end{bmatrix},\hspace{0.3mm}
Y^{-1}(\Phi) = \begin{bmatrix}
    1 & s_\phi t_\theta & c_\phi t_\theta\\
    0 & c_\phi & -s_\phi\\
    0 & s_\phi/c_{\theta} & c_\phi/c_{\theta}
\end{bmatrix}.
\end{equation}
Using the above relationship, the kinematic equation of a rotating rigid body can be described as $\dot \Phi = Y^{-1}(\Phi)\Omega$, where $\Omega(t) \eqdef (p(t),q(t),r(t)) \in \Real^3$ are body rates of the rotating rigid body. Let $x_q(t) \eqdef (x_I(t), y_I(t), z_I(t)) \in \Real^3$ denote the quadrotor's position, and $v_q(t) = (v_x(t), v_y(t), v_z(t))\in\Real^3$ be its velocity with respect to the inertial frame $\mc{I}$. Let $J \eqdef \diag(I_x,I_y,I_z) \in \Real^{3\times3}$ be the inertia matrix of the quadrotor expressed in body-fixed frame $\mc{B}$.

Let $m$ and $g$ be the mass of the quadrotor and acceleration due to gravity, respectively.
The input torque about the three body axes is denoted by $\tau(t) \eqdef \col(\tau_p(t), \tau_q(t), \tau_r(t))\in\Real^3$, which, together with the combined thrust $u_f(t)$, defines the four control inputs of the UAV system. The translational and rotational
drag coefficients are denoted by $k_t$ and $k_r$, respectively, and are assumed to be the same in all
directions.

The dynamics of the quadrotor can be locally represented\footnote{note that the argument $t$ is dropped for notational simplicity} using Euler angles as
\begin{equation}
\label{eq:quad-eular}
\begin{aligned}
  \dot \Phi &= Y^{-1}(\Phi) \Omega,\\
  \dot \Omega &= J^{-1} \left( \tau - k_r \Omega - (\Omega \times J\Omega)\right),\\
   \dot x_{q} &= v_{q},\\
  \dot{v}_q &= \frac{u_f}{m} R^{i}_{b}(\Phi) \Vec{b}_3 - g \Vec{i}_3 - \frac{k_t}{m} v_q.
\end{aligned}
\end{equation}

It can be seen that the input of the quadrotor model~\eqref{eq:quad-eular} is $\mc{U}_\tau(t) \eqdef (u_f(t),\tau(t))\in\Real^4$. However, the inputs of a physical quadrotor are forces generated by each propeller. Let the force generated by the $i$-th propeller be $f_i(t)\in\Real$, and we denote all forces of the quadrotor by $\mc{U}_f(t) \eqdef (f_1(t),f_2(t),f_3(t),f_4(t))\in\Real^{4}$. We assume that these thrust forces are nonnegative, twice differentiable, and bounded above by some $M < \infty$. The force inputs $\mc{U}_f$ are related to $\mc{U}_\tau$ by an invertible linear map, whose matrix representation is given by
\begin{equation}
\label{eq:Uf_and_Utau_relation}
    \begin{bmatrix}
        u_f\\ \tau_p\\ \tau_q \\\tau_r
    \end{bmatrix} = \begin{bmatrix}
        1 & 1 & 1 & 1\\
        0 & -l & 0 & l\\
        -l & 0 & l & 0\\
        d & -d & d & -d
    \end{bmatrix} \begin{bmatrix}
        f_1 \\ f_2\\ f_3 \\ f_4
    \end{bmatrix},
\end{equation}where $l$ is the distance of each rotor from the quadrotor's center of gravity and
$d$ is the ratio between the drag and the thrust coefficients of
the blade.

{
Let $x(t) = \col(\Phi(t), \Omega(t), x_q(t), v_q(t)) \in \Real^{12}$ be the state of the quadrotor. We take the output $y\in\Real^3$ of the system~\eqref{eq:quad-eular} to be the position $x_q$ of the quadrotor
\begin{equation}
\label{eq:output}
    y = h(x) = x_q.
\end{equation}

Next, we consider a quadrotor model under an adversarial attack.
}
{
\subsection{Quadrotor under Adversarial Attack}
\label{subsec:Under_Adv_Attack}
 We assume that only one rotor is vulnerable and without loss of generality, let us assume that $f_2$ is the vulnerable input. Meanwhile, the other three control inputs $f_1,f_3,f_4$ are secure and cannot be attacked. We also assume that the adversarial agent completely controls the second motor and can inject a malicious FDI signal, i.e., the thrust force generated by the second motor is now fully under the control of the adversarial~agent.

\begin{definition}[Adversarial Signal]
\label{def:adversarial_signal}
    An adversarial signal is a smooth bounded map $\mc{A}:\Real_{\geq 0}\times \Real^3 \to [0,M_a] \subset \Real$, $(t,x_q)\mapsto \mc{A}(t,x_q)$ for some $M_a<\infty$.
\end{definition}
We assume that there exists an attack detection mechanism that detects the instantaneous value of the adversarial signal~$\mc{A}(t,x_q)$; its value for each instant $t \ge 0$ is known to us.
Since actuator two is under attack, it follows from Definition~\ref{def:adversarial_signal}
that $f_2 = \mc{A}(t,x_q)$. By substituting $f_2 = \mc{A}(t,x_q)$ in~\eqref{eq:Uf_and_Utau_relation}, we obtain
\begin{equation}
\label{eq:tau_p}
    \tau_p =  \left(u_f - \frac{\tau_r}{d}\right)\left(\frac{l}{2}\right) - 2l\mc{A}(t,x_q).
\end{equation}
Under this adversarial attack on the second motor, the thrust forces $f_1,f_3,f_4$, can alternatively be represented by $(u_f,\tau_q,\tau_r)$  using~\eqref{eq:Uf_and_Utau_relation}. Before formally stating the problem, we make the following mild assumption:
\begin{assumption}
\label{as:uf_nonzero_&_no gimbal lock situation}
    For all time $t \geq 0$, the combined thrust force $u_f \neq 0$ and $\phi, \theta \neq \pm \pi/2$, i.e., the system does not go into gimbal lock situation.
\end{assumption}
Note that this assumption is reasonable: in order for the quadrotor to be in the air, it needs to have a nonzero combined thrust force. Also, we highlight that the singularity associated with gimbal lock is an artifact of the local parameterization, i.e., Euler angles, on $\mathsf{SO}(3)$.

}

\section{Problem Formulation}
\label{sec:Problem_formulation}
{

We consider a regular, smooth curve $\gamma$ in $\Real^{3}$ as the desired path for our quadrotor to follow. Let $\sigma:\Real\to\Real^{3}$ be the unit-speed parameterization of $\gamma$. The curve $\gamma$ can be either closed (see Definition \ref{def:L_periodic_curves}), e.g., a circle, or not closed, e.g., a straight line. It can be shown that  there exists a smooth map $s:\Real^{3}\to\Real^{2}$, $s \eqdef (s_1, s_2)$, such that $\gamma = s^{-1}(0)$ with $\textup{rank}(ds_{y}) = 2$ for each point $y$ on $\gamma$. Since $\gamma = s^{-1}(0)$, the path in the output space~\eqref{eq:output} can be represented by its zero-level set representation as
\begin{equation}
\label{eq:little_gamma}
\gamma \eqdef s^{-1}(0) = \set{ y \in \Real^3 : s_1(y) = s_2(y) = 0 }\subset \Real^3.
\end{equation}
We lift the path $\gamma$ to the quadrotor's state space:
\[\Gamma \eqdef \Bigl\{x \in \Real^{12} : s_1(h(x)) = s_2(h(x)) = 0\Bigl\}\hspace{0.7mm}.\]
We can now state the path-following control problem in the presence of adversarial agent:

\begin{problem}\label{problem:globalinvariance}
Consider a quadrotor model~\eqref{eq:quad-eular} satisfying Assumption~\ref{as:uf_nonzero_&_no gimbal lock situation}, an adversarial agent $\mc{A}$ given in Definition~\ref{def:adversarial_signal}, a desired path $\gamma$ defined in~\eqref{eq:little_gamma}, and a set of initial conditions in the neighborhood of ~$\Gamma$. Design a controller $\kappa:\Real^{12}\times \Real^k\to\Real^3$ (where $k\in\mbb{Z}^{+}$ is the dimension of the augmented state) such that the output of the system converges to the set $\gamma$, i.e., $\norm{y(t)}_{\gamma} \to 0$, as $t\to\infty$, such that $\Gamma$ or some subset of $\Gamma$ becomes asymptotically stable and forward invariant.
\end{problem}

It should be noted that the output $y\in\Real^3$ approaches the curve $\gamma$ if and only if the state $x$ of the system approaches $\Gamma$~\cite{NieMag04}. However, in general, the set $\Gamma$ cannot be made forward invariant, so we aim to stabilize a subset of $\Gamma$ and denote it by $\Gamma^{\star}$ (see~\cite{AkhWasNie2012, AkhWasNie2013}).

We observe that the path invariance of a set is a necessary condition for its stability. This means that asymptotic stability of a set implies its forward invariance. Next, we present our path-following~controller.

\section{Path-Following Controller Design}
\label{sec:path_following_controller_design}
{

In general, we cannot stabilize the set $\Gamma$, as highlighted previously, and we seek to stabilize the largest controlled
invariant subset $\Gamma^{\star}$ contained in $\Gamma$. We call $\Gamma^{\star}$ the path
following manifold (see~\cite{AkhWasNie2012, NieFulMag10, ConMagNieTos2010}). For any trajectory of the system that is in the path-following manifold, there exists a suitable choice of control input such that the output associated with the trajectory can be made to remain on the desired path. By rendering $\Gamma^{\star}$ attractive and invariant, Problem~\ref{problem:globalinvariance} can be solved. To obtain $\Gamma^{\star}$, by using the map $s:\Real^3 \to \Real^2$, we define the following:
\begin{equation}
\begin{aligned}
  \alpha =
  \left[\begin{array}{c}\alpha_1(x)\\\alpha_2(x)\end{array}\right]
  \eqdef s \circ h(x)= \left[\begin{array}{c}s_1 \circ h(x)\\s_2 \circ
      h(x)\end{array}\right].
\end{aligned}
\label{eq:alpha_fn}
\end{equation}

We see that $\alpha$ in~\eqref{eq:alpha_fn} has only two component functions. Since we have only three control inputs for our attacked quadrotor, we augment $\alpha$ with one additional function to have the number of component functions of the output equal to the number of control inputs. We then check if the augmented output has a well-defined vector relative degree. For this, let $\pi:\Real^{3}\to\Real$ be any smooth real-valued function. It can be easily shown (see~\cite{AkhWasNie2012}) that the virtual output $\hat{y} = (\alpha, \pi \circ h)$ does not have a well-defined relative degree
because the decoupling matrix is not full rank.

As in~\cite{AkhWasNie2013}, we can solve this problem by delaying the appearance of the control input $u_f$ using two integrators which are included through two additional states: $x_{13} \eqdef u_f, x_{14} \eqdef \dot u_f.$
Define $u_d \eqdef \ddot u_f$ and $u \eqdef \textup{col}(u_d,u_r,u_q) \in \Real^{3}$, where $u_r \eqdef \tau_r$ and $u_q \eqdef \tau_q$. We write the extended model, with a slight abuse of notation, as
\begin{equation}
\label{eq:extended_system}
\dot x = f(x) + g_1(x)u_d + g_2(x)u_r + g_3(x)u_q,
\end{equation}
where $x = \col(\Phi, \Omega, x_q, v_q,x_{13}, x_{14}) \in \Real^{14}$, and $f(x)$ is as follows:

\begin{equation}
\label{eq:f(x)}
f(x) =
\begin{bmatrix}
Y^{-1}(\Phi) \Omega\\
    -\frac{k_r p - lx_{13}/2 +2\mc{A}(t,x_q)l-I_y q r +I_z q r}{I_x}\\
    -\frac{k_r q + I_x p r - I_z p r}{I_y}\\
    -\frac{k_r r - I_x p q + I_y p q}{I_z}\\
    v_{q}\\
\frac{x_{13}}{m} R^{i}_{b}(\Phi) \Vec{b}_3 - g \Vec{i}_3 - \frac{k_t}{m} v_q\\
    x_{14}\\
    0
\end{bmatrix}.
\end{equation}

The vectors $g_1(x),\hspace{0.5mm} g_2(x),$ and $g_3(x)$ are as follows:
\begin{equation}
\label{eq:g's}
\begin{aligned}
 g_1(x) &=
\col(0, 0, 0, 0, 0, 0, 0, 0, 0, 0, 0, 0, 0, 1), \\
g_2(x) &=
\col(0, 0, 0, -l/(2 I_x d), 0, 1/I_z, 0, 0, 0, 0, 0, 0, 0, 0),\\
g_3(x) &=  \col(0, 0, 0, 0, 1/I_y, 0, 0, 0, 0, 0, 0, 0, 0, 0).
\end{aligned}
\end{equation}

To obtain the largest control invariant subset of $\Gamma$, we apply the zero dynamics algorithm~\cite{Isi95} and obtain
\begin{equation}
\label{eq:path_following_manifold}
\Gamma^{\star} = \set{ x \in \Real^{14} : L_{f}^{i}\alpha(x) = 0,\hspace{1mm} i = 0,1,2,3,4 }\subset \Gamma.
\end{equation}
}

{
\label{subsec:vec_rel_deg}
Having a well-defined vector relative degree is key to our controller design. To achieve a well-defined vector relative degree, we choose a function similar
to the one used in~\cite{AkhWasNie2012, AkhWasNie2013}. Let
$U_{\gamma} \subset \Real^3$ be a tubular \nbhd of the path
$\gamma$ and define the function

\begin{equation}
\begin{aligned}
\varpi :\hspace{0.5mm} &U_{\gamma} \rightarrow \Real \\ &y \mapsto \arg
\inf_{\lambda \in \Real}\|y - \sigma(\lambda)\|.
\end{aligned}
\label{eq:proj}
\end{equation}

The function defined above is smooth as long as $U_{\gamma}$ is a
sufficiently small ``tube'' around the curve $\gamma$.
With this definition, the
virtual output function $\hat{y}$ becomes
\begin{equation}
\label{eq:virtual_output_fn}
\begin{aligned}
\hat{y} = \left[\begin{array}{c}\alpha_1(x_q)\\\alpha_2(x_q)\\ \pi(x_q) \end{array}\right] =
\left[\begin{array}{c}s_1 \circ h(x) \\s_2 \circ h(x) \\ \varpi \circ h(x)\end{array}\right].
\end{aligned}
\end{equation}

We now prove that at each point of $\Gamma^{\star}$, the extended quadrotor system~\eqref{eq:extended_system} under the bounded adversarial attack $\mc{A}$ has a well-defined vector relative degree.

}

{
\begin{lemma}
\label{lem:welldefined_deg}
    Given the extended system in~\eqref{eq:extended_system} with a virtual output map defined in~\eqref{eq:virtual_output_fn}, and an adversarial attack signal $\mc{A}$ defined in Definition \ref{def:adversarial_signal}, the system has a well-defined vector relative degree of $\{4,4,4\}$ for all $x\in\Gamma^\star \cap \{x \in \Real^{14}:
x_{13} \ne 0 \}$ under Assumption~\ref{as:uf_nonzero_&_no gimbal lock situation}.
\end{lemma}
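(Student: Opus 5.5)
The plan is to check the two defining conditions of vector relative degree $\{4,4,4\}$ for $\hat y=(\alpha_1,\alpha_2,\pi)$ directly. First, $L_{g_j}L_f^k\hat y_i=0$ for $k=0,1,2$ and $j=1,2,3$ near the points of interest. Second, the $3\times 3$ decoupling matrix $D(x)=[L_{g_j}L_f^3\hat y_i]$ is nonsingular at every $x\in\Gamma^\star$ with $x_{13}\neq 0$. Every component of $\hat y$ depends only on the output $y=x_q$, so the computation reduces to tracking the first through fourth derivatives of $x_q$ along~\eqref{eq:extended_system}.

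\textbf{Step 1: the chain of derivatives.} Write $\hat y=\beta(x_q)$ with $\beta=(s_1,s_2,\varpi)$. This is smooth on $U_\gamma$ by~\eqref{eq:proj}. Since $\dot x_q=v_q$ contains no input, $L_{g_j}\hat y=0$. Next,
\[
L_f\hat y=d\beta_{x_q}v_q,\qquad L_f^2\hat y=d^2\beta_{x_q}(v_q,v_q)+d\beta_{x_q}\bigl(\tfrac{x_{13}}{m}R^i_b(\Phi)\vec b_3-g\vec i_3-\tfrac{k_t}{m}v_q\bigr).
\]
None of $g_1,g_2,g_3$ has a nonzero entry in the $\Phi$, $x_q$, $v_q$ or $x_{13}$ slots; $g_1$ enters only $x_{14}$, and $g_2,g_3$ enter only $\Omega$. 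Hence $L_{g_j}L_f\hat y=0$. Similarly $L_{g_j}L_f^2\hat y=0$, because $L_f^2\hat y$ depends on $(\Phi,x_q,v_q,x_{13})$ only.

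\textbf{Step 2: the third derivative.} Differentiating once more, $L_f^3\hat y$ contains $d\beta\,\tfrac{1}{m}\bigl(x_{14}R\vec b_3+x_{13}\tfrac{d}{dt}(R\vec b_3)\bigr)$ plus terms in $(\Phi,x_q,v_q,x_{13})$. Here $\tfrac{d}{dt}(R\vec b_3)=R\,\hat\Omega\vec b_3=R(q\vec b_1-p\vec b_2)$. The attack signal $\mc A(t,x_q)$ enters only through $\dot p$ in $f$, so it does not appear at this order. At the next order it appears only in $L_f^4\hat y$, since $\mc A$ is smooth and bounded, and it never multiplies an input. Applying the $g_j$ to $L_f^3\hat y$ gives
\[
D(x)=\frac{1}{m}\,d\beta_{x_q}\,R^i_b(\Phi)\begin{bmatrix}\vec b_3 & x_{13}\bigl(-\tfrac{-l}{2I_xd}\bigr)\vec b_2 & \tfrac{x_{13}}{I_y}\vec b_1\end{bmatrix}.
\]
The first column comes from $g_1$ acting on $x_{14}$. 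The second comes from $g_2$, which enters $p$ with coefficient $-l/(2I_xd)$ and also enters $r$, though $r$ does not appear in $\tfrac{d}{dt}(R\vec b_3)$. The third comes from $g_3$ acting on $q$. The exact signs are bookkeeping. The structural point is that the three columns are nonzero multiples of $R\vec b_3$, $R\vec b_2$ and $R\vec b_1$, with the multiples involving $x_{13}$.

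\textbf{Step 3: nonsingularity (main obstacle).} We have $\det D=\tfrac{c\,x_{13}^2}{m^3}\det(d\beta_{x_q})\det R\,\det[\vec b_3\ \vec b_2\ \vec b_1]$ with $c=\tfrac{l}{2I_xI_yd}\neq 0$. The factor $\det R=1$, and the matrix of body basis vectors is a permutation, so its determinant is $\pm1$. The factor $x_{13}^2$ is nonzero by hypothesis and Assumption~\ref{as:uf_nonzero_&_no gimbal lock situation}; the same assumption keeps $\Phi$ in $\mc E_D$, so $R$ and $Y^{-1}$ are well defined.

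The crux is $\det(d\beta_{x_q})\neq0$ on $\Gamma^\star$. Any $x\in\Gamma^\star$ has $x_q\in\gamma$, where $ds_{x_q}$ has rank $2$ and its rows span the normal plane to $\gamma$. At the same time, $d\varpi_{x_q}$ applied to the unit tangent $\sigma'(\varpi(x_q))$ equals $1$, because $\varpi(\sigma(\lambda))=\lambda$ on the tube. Suppose $d\beta\,v=0$. Then $v$ is tangent to $\gamma$, so $v=c\sigma'$, and $d\varpi\,v=c=0$. Hence $d\beta_{x_q}$ is invertible.

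By continuity, the first condition holds on a neighborhood, and $\det D\neq0$ at each such point. This gives vector relative degree $\{4,4,4\}$, with $\sum r_i=12<14$. The part I expect to be delicate is verifying the $g_2$ column carefully: $g_2$ affects both $p$ and $r$, and because $R\vec b_3$ depends on $\Phi$ through $\dot\Phi=Y^{-1}\Omega$, the $r$-component might seem to leak in. I would handle this by computing $\tfrac{d}{dt}(R\vec b_3)$ via $\dot R=R\hat\Omega$ rather than through Euler-angle partials, which makes the independence from $r$ transparent.
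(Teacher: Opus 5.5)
Your proposal is correct and follows essentially the paper's route. You verify $L_{g_j}L_f^k\hat y\equiv 0$ for $k\le 2$, reduce $\det D$ to a nonzero constant times $x_{13}^2\det(d\beta_{x_q})$, and conclude from $x_{13}\neq 0$ together with the transversality of $\sigma'$ to $\ker ds$. On $\gamma$, $d\varpi=\sigma'^{\top}$, so $\det(d\beta_{x_q})$ is, up to sign, the paper's triple product $\langle d_\chi\alpha_1, d_\chi\alpha_2\times\sigma'\rangle$. Your factorization of $D$ via $\dot R=R\hat\Omega$ and your kernel argument for invertibility of $d\beta$ spell out what the paper leaves to ``direct computation'' and to its cited Lemma V.2; your overall constant differs from the paper's by a factor of $2$, which is immaterial.
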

\begin{proof}
Consider any arbitrary $x^\star \in \Gamma^\star \cap \{x \in \Real^{14}: x_{13} \ne 0
  \}$. Since $\Gamma^\star
  \subseteq \Gamma$, $x^{\star} \in \Gamma$ and so, the output $h(x^{\star})$ lies on the path $\gamma$. This implies that there exists a $\lambda^{\star} \in \Real$ such that $h(x^\star) = \sigma(\lambda^\star)$. By the definition of vector
  relative degree, we first show that
$
L_{g_i}L^j_f\pi(x) = L_{g_i}L^j_f\alpha_k(x) \equiv 0
$
for $i \in \{1, 2, 3\}$, $j \in \{0, 1, 2\}$, $k \in \{1, 2\}$ in a
\nbhd of $x^\star$. It is easy to see that this holds by simply computing these Lie derivatives. Then, we need to show that the $3 \times 3$ decoupling matrix
\begin{equation}
\label{eq:decoupling_matrix_general}
 D(x^\star)=
  \left[\begin{array}{c c c}
      L_{g_{1}}L_{f}^{3}\alpha_1(x^\star) &  L_{g_{2}}L_{f}^{3}\alpha_1(x^\star) & L_{g_{3}}L_{f}^{3}\alpha_1(x^\star)\\
      L_{g_{1}}L_{f}^{3}\alpha_2(x^\star) &  L_{g_{2}}L_{f}^{3}\alpha_2(x^\star)  & L_{g_{3}}L_{f}^{3}\alpha_2(x^\star)\\
      L_{g_{1}}L_{f}^{3}\pi(x^\star) &  L_{g_{2}}L_{f}^{3}\pi(x^\star) & L_{g_{3}}L_{f}^{3}\pi(x^\star)
  \end{array}\right],
\end{equation}is invertible.
Interestingly, the adversarial attack signal $\mc{A}$ does not appear in any entry of the decoupling matrix $D(x^{\star})$ and so, the determinant of $D(x^{\star})$ is as follows:
\begin{equation}
\begin{split}
\det{(D(x))}  = \frac{ l(x_{13})^2 }{I_x I_y  m^3 d}
  \left\langle -d_\chi\alpha_1,  \left( d_\chi\alpha_2 \times
      \sigma^{\prime} \right) \rangle\right.
      ,
\end{split}
\label{eq:detD}
\end{equation} where $d_\chi \alpha_{i} \eqdef \col(\frac{\partial\alpha_i}{\partial x_I},\frac{\partial\alpha_i}{\partial y_I},\frac{\partial\alpha_i}{\partial z_I})$ for $i \in \{1,2\}$ and $\sigma^\prime \eqdef \col(\frac{d\sigma_x}{d\lambda},\frac{d\sigma_y}{d\lambda},\frac{d\sigma_z}{d\lambda})$. The determinant is zero if and only if any term in the numerator
of~\eqref{eq:detD} is zero. Since $x^\star \in \Gamma^\star \cap \{x \in \Real^{14}:
x_{13} \neq 0 \}$, the combined thrust $x_{13} \neq 0$ under Assumption~\ref{as:uf_nonzero_&_no gimbal lock situation}. Also, $\Sp\{d_\chi \alpha_1, d_\chi
\alpha_2, \sigma^\prime \}(x^\star) = \Real^3$ (see\cite[Lemma V.2]{AkhWasNie2013}) and so, $ \langle -d_\chi\alpha_1, \left(
  d_\chi\alpha_2 \times \sigma^{\prime} \right) \rangle \neq 0$ at
$x^\star$ (see\cite[Lemma V.1]{AkhWasNie2013}). Hence, we have shown that for each $x^\star \in \Gamma^\star
\cap \{x \in \Real^{14}: x_{13} \neq 0 \}$, $\det{(D(x^\star))}\neq
0$, and therefore, the extended system has a well-defined vector relative degree at $x^\star$.
\end{proof}
}

{

The well-defined vector relative degree of $\{4,4,4\}$ for the extended system means that the
internal dynamics has dimension $14 - (4+4+4) = 2$. So, we require two additional functions to define a complete coordinate transformation. }

{
\begin{lemma}[Diffeomorphism]
\label{lemma:Diffeo}
   Suppose Assumption~\ref{as:uf_nonzero_&_no gimbal lock situation} holds, then for each $x^\star \in \Gamma^\star$, there exists a \nbhd $U_x \subset
  \Real^{14}$ of $x^\star$ such that the map $\mathscr{T} : U_{x}
 \rightarrow \mathscr{T}(U_x) \subset \Real^{14}$, defined by
\begin{equation}
\left[\begin{array}{c}\xi_{ji}\\\eta_{i}\\ \mu_{k}\end{array}\right]
= \mathscr{T}(x) =
\left[\begin{array}{c}L^{i-1}_f\alpha_j(x)\\L^{i-1}_f\pi(x)\\ \mu(x)\\ \end{array}\right],
\label{eq:diffeo}
\end{equation}
for $i\in\{1,2,3,4\}$, $j\in\{1,2\}$ and $k\in\{1,2\}$, is a
diffeomorphism.
\end{lemma}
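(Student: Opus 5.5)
The argument rests on the inverse function theorem. By Lemma~\ref{lem:welldefined_deg}, at each $x^\star\in\Gamma^\star$ with $x_{13}\neq 0$ (guaranteed by Assumption~\ref{as:uf_nonzero_&_no gimbal lock situation}) the virtual output $\hat y=(\alpha_1,\alpha_2,\pi)$ has vector relative degree $\{4,4,4\}$. Since $4+4+4=12<14$, a standard result in~\cite{Isi95} gives two consequences. First, the $12$ differentials $dL_f^{i-1}\alpha_j(x^\star)$ and $dL_f^{i-1}\pi(x^\star)$, $i\in\{1,\dots,4\}$, $j\in\{1,2\}$, are linearly independent. Second, one can choose two smooth functions $\mu_1,\mu_2$ so that the full Jacobian $d\mathscr{T}_{x^\star}$ is nonsingular. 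The plan is to carry out these two steps explicitly for the extended quadrotor model~\eqref{eq:extended_system}, using a concrete choice of $\mu$.

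\textbf{Step 1: independence of the twelve differentials.} I would give the usual argument. Suppose $\sum_{i,j} c_{ji}\, dL_f^{i-1}\alpha_j + \sum_i c_i\, dL_f^{i-1}\pi = 0$ at $x^\star$. Apply this covector successively to the vector fields $\mathrm{ad}_f^{k}g_\ell$ for $k=0,\dots,3$, $\ell=1,2,3$. The relative-degree conditions $L_{g_\ell}L_f^{j}\alpha_k\equiv 0$ and $L_{g_\ell}L_f^j\pi\equiv0$ for $j\le 2$, established in the proof of Lemma~\ref{lem:welldefined_deg}, make the resulting matrix block triangular, with diagonal blocks equal to $\pm D(x^\star)$. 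Since $\det D(x^\star)\neq0$ by~\eqref{eq:detD}, every coefficient vanishes. Note that the attack term $\mc{A}(t,x_q)$ enters $f$ only through the $\dot p$ component; I would treat it as a frozen parameter, or append $t$ as a state with $\dot t=1$, so that $\mathscr{T}$ is a map on $\Real^{14}$. As already observed, it does not affect $D$.

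\textbf{Step 2: choice of $\mu=(\mu_1,\mu_2)$.} The remaining two directions are the ones the output chain does not see. Counting coordinates, the chain $(\alpha,\pi)$ and their first three Lie derivatives involve $x_q$, $v_q$, $\phi$, $\theta$ and rates, and also $x_{13}$, $x_{14}$ through the thrust. This suggests that the yaw-related states are the unseen directions, and I would take $\mu_1=\psi$ and $\mu_2=r$ (or $\dot\psi$). The idea is that $\alpha$ and $\pi$ depend on $\psi$ only through $R^i_b(\Phi)\vec b_3$, which involves $\psi$ coupled with $\phi$ and $\theta$. I would then verify nonsingularity by writing $d\mathscr{T}_{x^\star}$ in the ordered coordinates and checking that the $14\times14$ Jacobian has full rank. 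Equivalently, the covectors $d\psi$ and $dr$ must not lie in the span of the twelve from Step 1. A practical way to do this is to compute in the physically natural coordinates $(x_q,v_q,a,\dot a,\psi,r)$, where $a=\frac{x_{13}}{m}R^i_b\vec b_3$ is the acceleration. The map from $(\phi,\theta,x_{13})$ to the thrust vector $x_{13}R^i_b\vec b_3$ is locally invertible for fixed $\psi$ when $x_{13}\neq0$ and $\phi,\theta\neq\pm\pi/2$, which is where Assumption~\ref{as:uf_nonzero_&_no gimbal lock situation} enters. This reduces the check to the invertibility of $(s_1,s_2,\varpi)$ as a local chart near $\gamma$. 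That invertibility holds because $\Sp\{d_\chi\alpha_1,d_\chi\alpha_2,\sigma'\}=\Real^3$ on $\gamma$, as in the proof of Lemma~\ref{lem:welldefined_deg}, together with the fact that $d\varpi$ is nonzero on $\sigma'$.

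\textbf{Step 3: conclusion.} With $d\mathscr{T}_{x^\star}$ invertible, the inverse function theorem gives a neighbourhood $U_x$ of $x^\star$ on which $\mathscr{T}:U_x\to\mathscr{T}(U_x)$ is a diffeomorphism. If needed, I would shrink $U_x$ so that it stays inside the tube where $\varpi$ is smooth and where $x_{13}\ne0$.

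The main obstacle is Step 2: exhibiting $\mu$ concretely and confirming that the complementary directions are indeed $(\psi,r)$ rather than some mixture. This is delicate because the attacked model couples $\tau_p$ with $\tau_r$ and $u_f$ through~\eqref{eq:tau_p}, so $g_2$ has components in both $\dot p$ and $\dot r$. If the $(\psi,r)$ choice failed, I would fall back on the abstract existence argument. Since the distribution $G=\Sp\{g_1,g_2,g_3\}$ is involutive (the $g_i$ are constant), Frobenius guarantees two independent functions $\mu$ with $L_{g_\ell}\mu=0$ that complete the coordinates, as in~\cite[Prop.~5.1.2]{Isi95}.
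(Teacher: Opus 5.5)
Your proof is correct, but it differs from the paper's in two places: the choice of $\mu_2$, and how you verify that $d\mathscr{T}$ is nonsingular.

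The paper takes $\mu_1=\psi$ and $\mu_2=-p/I_z-lr/(2I_xd)$. It states the determinant \eqref{eq:detT} as the result of a direct $14\times14$ computation, and reads off nonvanishing from $x_{13}\neq0$, $c_\phi\neq0$ and $\langle -d_\chi\alpha_1,d_\chi\alpha_2\times\sigma'\rangle\neq0$.

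Your $\mu_2=r$ works just as well. Because $\tfrac{d}{dt}(R^i_b\vec b_3)=R^i_b(\Omega\times\vec b_3)=R^i_b(q,-p,0)^\top$, the twelve output functions depend only on $(x_q,v_q,\Phi,x_{13},x_{14},p,q)$. Neither they nor $\psi$ involve $r$, so expanding along the $r$-column shows that your determinant and the paper's differ only by the nonzero constant factor $-l/(2I_xd)$.

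What the paper's combination buys is $L_{g_\ell}\mu_2=0$ for $\ell=1,2,3$. This makes the $\mu$-dynamics in \eqref{eq:dynamic_quadrotor_new_coordinates} and \eqref{eq:internal_dynamics} input-free. With $\mu_2=r$, a term $u_r/I_z$ appears in $\dot\mu_2$; that is harmless for this lemma. The property $L_{g_\ell}\mu=0$ that you invoke in your Frobenius fallback is precisely what the paper's choice satisfies.

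What your factorization through $(x_q,v_q,a,\dot a,\psi,r)$ buys is an explanation of the determinant rather than a bare assertion. The factor $x_{13}^2c_\phi$ comes from the map $(\phi,\theta,x_{13})\mapsto x_{13}R^i_b\vec b_3$ at fixed $\psi$. The path factor comes from $d(s_1,s_2,\varpi)$; on $\gamma$, where $d\varpi=\sigma'^\top$, its determinant is, up to sign, the bracket in \eqref{eq:detT}.

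To close the argument, state the block you left implicit: $\partial\dot a/\partial(x_{14},p,q)=\tfrac1m[R^i_b\vec b_3,\,-x_{13}R^i_b\vec b_2,\,x_{13}R^i_b\vec b_1]$ has determinant $x_{13}^2/m^3$, which supplies the remaining $x_{13}^2$. Once this is in place, three smaller points follow:
\begin{itemize}
\item Step~1 becomes redundant.
\item The remark that $d\psi$ and $dr$ ``must not lie in the span of the twelve'' should be replaced by independence of all fourteen covectors, which is what the factorization actually delivers.
\item Freezing $t$ is unnecessary, since $\mathcal{A}$ enters only $\dot p$ and therefore appears nowhere in $\mathscr{T}$.
\end{itemize}
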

\begin{proof}
   Since the dimension of the internal dynamics is 2, we choose two real-valued
   functions $\mu_1,\mu_2$ to complete the definition of $\mathscr{T}$ similar to~\cite{AkhWasNie2013}\hspace{0.5mm}: \vspace{-0.1in}
 \begin{equation}
  \begin{aligned}
  \mu_1 & \coloneqq \psi,\quad
  \mu_2  \coloneqq -\frac{p}{I_z} -\frac{l r}{2 I_x d}.
  \end{aligned}
\label{eq:internal_states}
 \end{equation}
 Next, we check the rank of the Jacobian matrix $\frac{\partial \mathscr{T}}{\partial x}$ to see whether it defines a local diffeomorphism. The determinant
 of $\frac{\partial \mathscr{T}}{\partial x}$ is given by \vspace{-0.1in}
\begin{equation}
\begin{split}
  \frac{ -l(x_{13})^4 c_{\phi}}{2 I_x m^6 d}
  \langle -d_\chi\alpha_1,  \left( d_\chi\alpha_2 \times \sigma^{\prime} \right) \rangle.
\end{split}
\label{eq:detT}
\end{equation}
Now, this determinant equals
zero if and only if any term in the numerator equals zero. However, by our hypothesis and Assumption \ref{as:uf_nonzero_&_no gimbal lock situation}, $x_{13} \neq 0$ and  $\phi \neq \pm\frac{\pi}{2}$. Moreover, as in the proof of Lemma~\ref{lem:welldefined_deg}, we have $ \langle -d_\chi\alpha_1, \left(
  d_\chi\alpha_2 \times \sigma^{\prime} \right) \rangle \neq 0$ at
$x^\star$. Thus, the Jacobian of $\mathscr{T}$ is invertible in a \nbhd
of $x^{\star}$. By\cite[Proposition 1.2.3]{Isi95}, $\mathscr{T}$ defines a local diffeomorphism.
\end{proof}}
The equivalent transformed system after applying $\mathscr{T}$ to the extended system~\eqref{eq:extended_system} becomes
\vspace{-0.05in}
{\footnotesize
\begin{equation}
\label{eq:dynamic_quadrotor_new_coordinates}
\begin{aligned}
      \dot{\xi}_{ij} &= \xi_{ij+1}\\
      \dot{\eta}_{j} &= \eta_{j+1}\\
      \dot{\xi}_{14} &= L_{f}^{4}\alpha_1 + \left. L_{g_{1}}L_{f}^{3}\alpha_1 u_d + L_{g_{2}}L_{f}^{3}\alpha_1 u_r + L_{g_{3}}L_{f}^{3}\alpha_1 u_q  \right|_{x = T^{-1}(\xi,\eta,\mu)}\\
      \dot{\xi}_{24} &= L_{f}^{4}\alpha_2 + \left. L_{g_{1}}L_{f}^{3}\alpha_2 u_d+ L_{g_{2}}L_{f}^{3}\alpha_2 u_r+ L_{g_{3}}L_{f}^{3}\alpha_2 u_q \right|_{x = T^{-1}(\xi,\eta,\mu)}\\
      \dot{\eta}_{4} &= L_{f}^{4}\pi + \left.  L_{g_{1}}L_{f}^{3}\pi u_d +L_{g_{2}}L_{f}^{3}\pi u_r +L_{g_{3}}L_{f}^{3}\pi u_q \right|_{x = T^{-1}(\xi,\eta,\mu)}\\
      \dot{\mu}_{k} &= \left. b_k(\xi,\eta,\mu) \right|_{x = T^{-1}(\xi,\eta,\mu)}
\end{aligned}
\end{equation}}
for $i\in\{1,2\}$, $j\in\{1,2,3\}$, $k\in\{1,2\}$ and where $b_k$ are smooth maps. From~\eqref{eq:dynamic_quadrotor_new_coordinates}, one can define a regular feedback of the form
\begin{eqnarray}
\label{eq:regular_feedback_transformation}
 \left[\!\!\!
  \begin{array}{c}
      u_{d}\\
      u_{r}\\
      u_{q}\\
  \end{array} \!\!\!\right]:=D^{-1}(x)\left( \left[\!\!\!
  \begin{array}{c}
      -L_{f}^{4}\alpha_1\\
      -L_{f}^{4}\alpha_2\\
      -L_{f}^{4}\pi\\
  \end{array} \!\!\!\right]+
  \left[\!\!\!
  \begin{array}{c}
      v^{\xi_{1}}\\
      v^{\xi_{2}}\\
      v^{\eta}\\
  \end{array} \!\!\!\right]
  \right),
\end{eqnarray}
where $( v^{\xi_{1}},v^{\xi_{2}},v^{\eta} )$ are auxiliary control
inputs.
Using Lemma~\ref{lem:welldefined_deg} and the regular feedback law~\eqref{eq:regular_feedback_transformation}, the extended quadrotor model~\eqref{eq:extended_system} transformed to $3$ decoupled chain of integrators and two-dimensional internal dynamics is given as follows:
\begin{equation}
\left. \begin{array}{cccc}\label{eq:chain_form}
		\dot{\xi}_{11}={\xi_{12}} & \dot{\xi}_{21}={\xi_{22}} &  \dot{\eta}_{1}={\eta_{2}} & \dot\mu=b(\xi,\eta,\mu).
        \\
		\dot{\xi}_{12}={\xi_{13}} & \dot{\xi}_{22}={\xi_{23}} & \dot{\eta}_{2}={\eta_{3}} &
        \\
		\dot{\xi}_{13}={\xi_{14}} & \dot{\xi}_{23}={\xi_{24}}  & \dot{\eta}_{3}={\eta_{4}}  &
        \\
		\dot{\xi}_{14}= v^{\xi_{1}} & \dot{\xi}_{24}= v^{\xi_{2}} & \dot{\eta}_{4}= v^{\eta} &\\
		   \end{array} \right.
\end{equation}
Since~\eqref{eq:chain_form} is a partial linear system, designing a controller for the linear subsystems is straightforward. For the $\xi$-subsystem, we design the following linear controller:
\begin{equation}
\label{eq:linear-xi-controller}
\begin{aligned}
    v^{\xi_{1}} & = -\sum_{i=1}^{4}k^{\xi}_{1i}\xi_{1i},\quad
    v^{\xi_{2}}  = -\sum_{i=1}^{4}k^{\xi}_{2i}\xi_{2i},
\end{aligned}
\end{equation}
for some strictly positive gains $k^{\xi}_{1i}$ and $k^{\xi}_{2i}$, $i\in\{1,\cdots,4\}$, which renders the resulting closed-loop linear system Hurwitz. We underscore that by stabilizing the origin of the $\xi$-subsystem, the system state converges to the set $\Gamma^\star$.
}
Similarly, a linear control law is designed for the
$\eta-$subsystem
{
\begin{equation}
\label{eq:linear-eta-controller}
\begin{aligned}
    v^{\eta} & = - k^{\eta}_{2}(\eta_{2} - \dot\eta^{ref}_{1}) -k^{\eta}_{3}(\eta_{3} - \ddot\eta^{ref}_{1}) - k^{\eta}_{4}(\eta_{4} - \dddot\eta^{ref}_{1})
\end{aligned}
\end{equation}
for some appropriate gains $k^{\eta}_{i}$ for $i\in\{2,3,4\}$ to ensure stability.

Next, we show that the internal state $\mu\in\Real^2$ remains bounded under bounded adversarial attacks.

\subsection{Internal Dynamics}
\label{subsec:int_dyn}
The internal dynamics of the uncontrolled $\mu$-subsystem is as follows:
\begin{equation}
\label{eq:internal_dynamics}
\begin{aligned}
       \dot \mu_1(\xi, \eta, \mu) &= \dot \psi = \left. \frac{1}{{c_{\theta}}}\left(qs_{\phi} + rc_{\phi}\right) \right|_{x = T^{-1}(\xi, \eta, \mu)},\\
      \dot \mu_2(\xi, \eta, \mu) &= -\frac{l x_{13}}{2 I_x I_z} + \frac{l k_r r + pql(I_y - I_x)}{2 I_x I_z d} + \\
      & \left. \frac{k_r p + 2\mc{A}(t,x_q)l +qr(I_z - I_y)}{I_x I_z}\right|_{x = T^{-1}(\xi, \eta, \mu)}.
      \end{aligned}
    \end{equation}
We now prove the boundedness of internal dynamics:

}

{
\begin{lemma}
\label{lemma:Int_Dyn_Bounded}
If the quadrotor system has bounded control inputs $(u_f, u_r, u_q)\in\Real^3$ and Assumption \ref{as:uf_nonzero_&_no gimbal lock situation} holds, then the derivatives of the internal states, i.e., $\dot \mu_1$ and $\dot \mu_2$ given in~\eqref{eq:internal_dynamics} are bounded. Also, $\mu_2$ is bounded.
\end{lemma}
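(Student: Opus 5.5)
The plan is to work directly from the explicit expressions in \eqref{eq:internal_dynamics}. Everything reduces to showing that the body rates $(p,q,r)$ stay bounded. Once they do, $\dot\mu_1$ and $\dot\mu_2$ are bounded because they are built from bounded quantities through continuous operations.

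\emph{Step 1: bound the body rates.} I would use the rotational subsystem of \eqref{eq:quad-eular},
\[
\dot\Omega = J^{-1}\bigl(\tau - k_r\Omega - \Omega\times J\Omega\bigr).
\]
Take the kinetic-energy-like function $V=\tfrac12\Omega^\top J\Omega$. Since $\langle \Omega, \Omega\times J\Omega\rangle = 0$, we get
\[
\dot V = \langle\Omega,\tau\rangle - k_r\|\Omega\|^2 \le \|\tau\|\,\|\Omega\| - k_r\|\Omega\|^2 .
\]
The torque $\tau$ is bounded for two reasons. The components $u_r=\tau_r$ and $u_q=\tau_q$ are bounded by hypothesis. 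The component $\tau_p$ is bounded by \eqref{eq:tau_p}, because $u_f$ and $\tau_r$ are bounded and $0\le\mc{A}\le M_a$ by Definition~\ref{def:adversarial_signal}. Let $\bar\tau$ be a bound on $\|\tau\|$. Then $\dot V<0$ whenever $\|\Omega\|>\bar\tau/k_r$, so $\Omega(t)$ stays in a compact set determined by $\Omega(0)$ and $\bar\tau/k_r$. This uses $k_r>0$. Hence $p$, $q$, $r$ are uniformly bounded.

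\emph{Step 2: bound $\dot\mu_1$ and $\dot\mu_2$.} For $\dot\mu_1$, I would note that $|q s_\phi + r c_\phi|\le |q|+|r|$. Assumption~\ref{as:uf_nonzero_&_no gimbal lock situation} keeps $\theta$ away from $\pm\pi/2$, so $1/c_\theta$ is finite.

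This is the main obstacle. The assumption as stated only gives $c_\theta\neq 0$ pointwise, whereas a uniform bound needs $|c_\theta|\ge c_0>0$ along the trajectory. I would handle this in one of two ways. One option is to read the assumption as keeping $\theta$ in a compact subset of $(-\pi/2,\pi/2)$. The other is to argue that on the bounded-input closed loop near $\Gamma^\star$, $\theta$ stays in a compact subinterval, so $c_\theta$ is bounded away from zero by continuity.

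For $\dot\mu_2$, every term in \eqref{eq:internal_dynamics} is bounded:
\begin{itemize}
\item $x_{13}=u_f$ is bounded by hypothesis;
\item the terms $k_r r$, $k_r p$, $pq$, $qr$ are bounded by Step 1;
\item the term $2\mc{A}l$ is at most $2M_al$.
\end{itemize}
This gives an explicit constant bound on $|\dot\mu_2|$.

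\emph{Step 3: bound $\mu_2$ itself.} I would avoid integrating $\dot\mu_2$, which would only give linear growth in time. Instead I would use the definition \eqref{eq:internal_states}, $\mu_2=-p/I_z - lr/(2I_xd)$. This is a fixed linear combination of $p$ and $r$, so it is bounded directly by Step 1. Note that $\mu_1=\psi$ is not claimed to be bounded, which matches the statement of the lemma.
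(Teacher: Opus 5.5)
Your proposal is correct and follows the paper's proof. The paper also bounds the body rates, uses the no-gimbal-lock assumption to keep $1/c_\theta$ finite, bounds each term of \eqref{eq:internal_dynamics} by the triangle inequality, and bounds $\mu_2$ directly as a linear combination of $p$ and $r$. Where you add to it: your energy function $V=\tfrac12\Omega^\top J\Omega$ (which needs $k_r>0$) proves the step the paper only asserts (``it can be shown that bounded control inputs imply bounded body rates''), and your first reading of the assumption, with $\theta$ uniformly away from $\pm\pi/2$, is exactly what the paper tacitly uses when it says the system is ``bounded away from singularities.''
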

\begin{proof}
It can be shown that bounded control inputs imply bounded body rates $p,q,r$. By Assumption \ref{as:uf_nonzero_&_no gimbal lock situation}, the system is bounded away from singularities $\left(\phi, \theta = \pm \pi/2\right)$. From~\eqref{eq:internal_dynamics}, we get
\vspace{-0.1in}
\begin{equation}
\label{eq:int_dyn_bounded}
\begin{aligned}
| \dot \mu_1 | & \leq \left| \frac{1}{{c_{\theta}}} \right| \left(| q | + |r |\right),\\
     | \dot \mu_2 | & \leq \frac{l |x_{13}| }{2 I_x I_z} + \frac{l k_r
| r | + |pql(I_y - I_x)|}{2 I_x I_z d}  + \\
      &   \frac{k_r | p | + 2 |\mc{A}(t,x_q)| l +|qr(I_z - I_y)|}{I_x I_z},\\
| \mu_2 | & \leq \frac{|p|}{I_z} + \frac{l |r| }{2 I_x d},
\end{aligned}
\end{equation}
which are bounded because the body rates $p,q,r$, the thrust forces $f_1, f_3, f_4$, and the adversarial signal $\mc{A}$ are all bounded.
\end{proof}
}

We are now ready to prove the main result.
{
\begin{theorem}[Main Result]
\label{theo:main_result}
    Given the extended system~\eqref{eq:extended_system}, the virtual output function~\eqref{eq:virtual_output_fn}, an adversarial attack signal $\mc{A}$ defined in Definition \ref{def:adversarial_signal}, the regular feedback given in~\eqref{eq:regular_feedback_transformation} along with the controllers in~\eqref{eq:linear-xi-controller}, the path following manifold $\Gamma^\star$ given in~\eqref{eq:path_following_manifold} is asymptotically stable and forward-invariant.
\end{theorem}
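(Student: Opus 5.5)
The argument runs in the local coordinates of Lemma~\ref{lemma:Diffeo}. In these coordinates the path following manifold has the simple description
\[
\mathscr{T}(\Gamma^\star \cap U_x) = \{(\xi,\eta,\mu) : \xi = 0\}\cap \mathscr{T}(U_x).
\]
This holds because $\Gamma^\star$ in~\eqref{eq:path_following_manifold} is defined by $L_f^{i}\alpha_j = 0$ for $i=0,\dots,4$. The conditions with $i\le 3$ are exactly $\xi_{ji}=0$, and the condition with $i=4$ is implied once $\xi\equiv 0$ is enforced by the feedback. Lemma~\ref{lem:welldefined_deg} shows that $D(x)$ is invertible on $\Gamma^\star\cap\{x_{13}\neq 0\}$, and hence on a neighborhood of it. Assumption~\ref{as:uf_nonzero_&_no gimbal lock situation} guarantees $x_{13}\neq 0$. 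Therefore the regular feedback~\eqref{eq:regular_feedback_transformation} is well defined there, and the closed loop takes the normal form~\eqref{eq:chain_form}. The attack signal $\mc{A}$ enters only through $L_f^4(\cdot)$ and the $\mu$-dynamics, not through $D$. Since $\mc{A}$ is measured, the feedback cancels it exactly in the $\xi$ and $\eta$ channels.

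\emph{Forward invariance.} With~\eqref{eq:linear-xi-controller}, the $\xi$-subsystem is the autonomous linear system $\dot\xi = A_\xi \xi$ with $A_\xi$ Hurwitz. It is fully decoupled from $\eta$, $\mu$, and $\mc{A}$. If $\xi(0)=0$, uniqueness of solutions gives $\xi(t)\equiv 0$ for as long as the solution stays in the coordinate chart. Consequently $L_f^i\alpha(x(t))=0$ for all $i$ and $x(t)\in\Gamma^\star$. To keep the solution in the chart, I would invoke the smoothness of $\mathscr{T}$ along $\Gamma^\star$. I would also use Lemma~\ref{lemma:Int_Dyn_Bounded}: bounded inputs and a bounded attack give bounded $\mu_2$ and bounded $\dot\mu_1$. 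The $\eta$-subsystem under~\eqref{eq:linear-eta-controller} tracks a bounded reference, so $\eta_2,\eta_3,\eta_4$ remain bounded. Together these show that the trajectory on $\Gamma^\star$ has no finite escape. The angle $\mu_1=\psi$ may grow, but only $\psi$ modulo $2\pi$ matters for the attitude chart.

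\emph{Asymptotic stability.} With the quadratic Lyapunov function $V(\xi)=\xi^\top P\xi$, where $A_\xi^\top P+PA_\xi=-I$, the solution satisfies $\|\xi(t)\|\le c e^{-at}\|\xi(0)\|$. Since $\|x\|_{\Gamma^\star}$ is locally equivalent to $\|\xi\|$ through the diffeomorphism $\mathscr{T}$, this estimate yields Lyapunov stability and attractivity of $\Gamma^\star$ relative to nearby initial conditions. Explicitly, $\|x(t)\|_{\Gamma^\star}\le k e^{-at}\|x(0)\|_{\Gamma^\star}$, provided the trajectory stays in the neighborhood where $\mathscr{T}$ and $D^{-1}$ are valid.

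\emph{Main obstacle.} The main difficulty is justifying that trajectories starting near $\Gamma^\star$ remain in the domain where Lemmas~\ref{lem:welldefined_deg} and~\ref{lemma:Diffeo} apply, given that the $\mu$ and $\eta$ states are not themselves stabilized and the attack acts on $\dot\mu_2$. My first approach is to take initial conditions with $\|\xi(0)\|$ small, so that $\xi$ decays exponentially. I would then use Lemma~\ref{lemma:Int_Dyn_Bounded} together with Assumption~\ref{as:uf_nonzero_&_no gimbal lock situation} to bound $\mu$ and $\eta$, and argue that the domain of validity, which is a tube around $\Gamma^\star$ uniform in these bounded internal variables, is never left. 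The result would be stability of the closed set $\Gamma^\star$ in the sense of~\cite{NieMag04}.
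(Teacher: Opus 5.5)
Your route is the paper's: pass to the coordinates of Lemma~\ref{lemma:Diffeo}, identify $\Gamma^\star$ with $\{\xi=0\}$, note that after~\eqref{eq:regular_feedback_transformation} and~\eqref{eq:linear-xi-controller} the $\xi$-chain is an autonomous Hurwitz system decoupled from $\eta$, $\mu$ and $\mathcal{A}$, and pull the exponential decay of $\xi$ back through $\mathscr{T}$. You get invariance directly from uniqueness, whereas the paper deduces it from stability.

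\textbf{The $i=4$ step is false.} You claim the $i=4$ condition in~\eqref{eq:path_following_manifold} ``is implied once $\xi\equiv0$ is enforced by the feedback.'' Along a closed-loop solution with $\xi\equiv0$ you only get $\dot\xi_{j4}=L_f^4\alpha_j+\sum_k L_{g_k}L_f^3\alpha_j\,u_k=0$. That says the input cancels $L_f^4\alpha_j$; the state function $L_f^4\alpha_j$ itself need not vanish. In fact $L_f^4\alpha_j$ contains $\mathcal{A}(t,x_q)$ through the $p$-component of $f$, with coefficient proportional to $x_{13}\langle d_\chi\alpha_j, R^i_b(\Phi)\vec b_2\rangle$. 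Under a time-varying attack it therefore generically does not stay zero, and your conclusion ``$L_f^i\alpha(x(t))=0$ for all $i$'' fails at $i=4$. Taken literally, the set with $i\le 4$ is a (generically proper) subset of $\{\xi=0\}$ and is not forward invariant in general.

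The correct fix is that the zero dynamics algorithm already terminates at $\{L_f^i\alpha=0,\ i=0,\dots,3\}$. The first two rows of $D$ have rank two, so the step-four requirement $L_f^4\alpha+L_gL_f^3\alpha\,u=0$ is always solvable for $u$. The index $4$ in~\eqref{eq:path_following_manifold} should simply be dropped. This is exactly what the paper's proof tacitly does when it writes $\mathscr{T}(\Gamma^\star)=\{\xi=0\}$.

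\textbf{The ``main obstacle'' is real, but the paper does not resolve it either.} The paper just asserts that exponential stability of $\{\xi=0\}$ transfers through the local diffeomorphism $\mathscr{T}$. It lets the standing assumption ($x_{13}\neq0$, no gimbal lock) carry the burden of keeping solutions in the domain. Your proposed way out is circular as stated:
\begin{itemize}
\item Lemma~\ref{lemma:Int_Dyn_Bounded} \emph{assumes} bounded inputs. In closed loop the inputs are $D^{-1}(\cdot)$ evaluated along the solution, and these are bounded only if the solution already stays where $D$ is uniformly invertible.
\item Equivalence of $\|x\|_{\Gamma^\star}$ and $\|\xi\|$ along the noncompact $\Gamma^\star$ needs uniform bounds on $d\mathscr{T}$ and its inverse. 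It also needs injectivity of $\mathscr{T}$ on a whole tube. The pointwise statement of Lemma~\ref{lemma:Diffeo} supplies neither.
\end{itemize}
To close the argument, add such a uniform tube as an explicit hypothesis, or restrict to a compact portion of $\Gamma^\star$. Alternatively, as the paper effectively does, take bounded closed-loop inputs and the standing assumption as a priori hypotheses on trajectories.
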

\begin{proof}
The feedback transformation~\eqref{eq:regular_feedback_transformation} is well-defined in a \nbhd of the path-following manifold $\Gamma^{\star}$. By Lemma \ref{lemma:Diffeo}, the function $\mathscr{T}$ is a diffeomorphism of a \nbhd of $\Gamma^{\star}$ onto its image. In $(\xi, \eta, \mu)$ coordinates, the set $\Gamma^{\star}$ becomes $\mathscr{T}(\Gamma^{\star}) = \{(\xi, \eta, \mu) : \xi = 0 \}$. The $\xi$-subsystem in~\eqref{eq:chain_form}, after the feedback transformation, is linear time-invariant, and the feedback~\eqref{eq:linear-xi-controller} makes the set $\Gamma^\star$ exponentially stable. Thus, for any $x(0)$ close to $\Gamma^{\star}$, or equivalently, for any $\xi(0)$ close to 0, $\xi(t) \rightarrow 0$ exponentially. Therefore, $\mathscr{T}(\Gamma^{\star})$ is exponentially stable, and since $\mathscr{T}$ is a local diffeomorphism, this implies that $\Gamma^{\star}$ is exponentially stable as well and hence, is asymptotically stable and also forward-invariant.
\end{proof}
It is immediate to see that stability of $\Gamma^\star$ implies that the output of the system converges to the set $\gamma$, i.e., $\norm{y(t)}_{\gamma} \to 0$, as $t\to\infty$, and by construction, $\gamma$ becomes a forward invariant set. Hence, Problem~\ref{problem:globalinvariance} is solved in the presence of bounded adversarial attacks. Moreover, the controller~\eqref{eq:linear-eta-controller} makes the quadrotor follow a desired speed profile along the path, even in the presence of adversarial attacks.
}

}

\section{Numerical Simulation}
\label{sec:Num_Simul}
{
We assume, without loss of generality, that rotor two is under a bounded attack and the quadrotor has a weight-to-thrust ratio of $1:4$, i.e., each propeller is capable of producing at least $8$N of force~\cite{BauSca2021}. It is assumed that the rotors cannot produce negative thrust, hence the actuator limits are between zero and eight, i.e., $f_i(t) \in [0,8]$ for $i \in  \set{1,3,4}$. The mass of the quadrotor is $1.1$ kg and the inertias of the quadrotor are $I_x = 0.0020066 \text{ kg.m}^2, I_y = 0.0020038 \text{ kg.m}^2, I_z = 0.0013323 \text{ kg.m}^2$. The length of each arm of the quadrotor is $l = 0.2656$ m.

The desired path is a non-closed curve given by $\gamma = \set{y\in\Real^3: y_2 - \cos y_1 = 0, y_3 + 4\cos y_1 - 8= 0}$, and the adversarial signal is dependent on both time and system state $x_q = (x_I, y_I, z_I)$, namely
$
\mc{A}(t,x_q) = 0.1 t + \sin z_I.
$
We highlight that the system only knows the instantaneous value of $\mc{A}(t,x_q)$ using a speed sensor attached to each motor of the quadrotor and not the whole attack model. Using a known motor constant $k_m$ and the measured motor speed $\omega_m$, one can measure the force exerted at the $i$-th motor by $f_i = k_m\omega_m$. Moreover, we assume $10\%$ of parametric uncertainty in the quadrotor's inertia and $2\%$ of uncertainty in mass. A small uncertainty in mass is considered because the mass of the quadrotor can be accurately measured within grams. On the other hand, the system's inertia is difficult to measure accurately. Figure~\ref{fig:sine-following} shows that the quadrotor is following the curve starting from an initial position of $x_q = (1.5,0.3.0.5)$m despite parametric uncertainties.
\begin{figure}
    \centering
    \includegraphics[width = 6cm, height = 4cm]{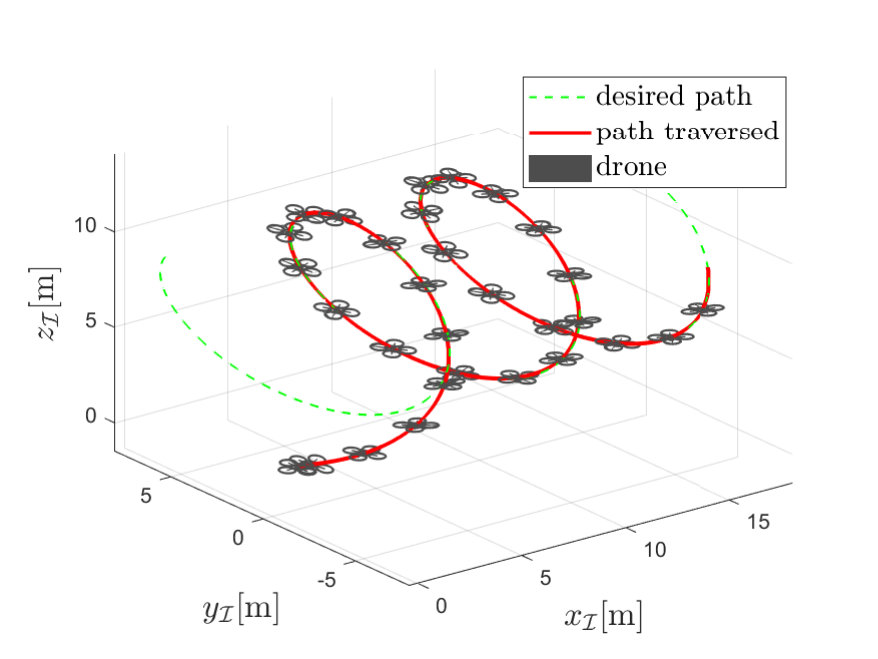}
    \caption{Quadrotor following the path}
    \label{fig:sine-following}
    \vspace{-0.2in}
\end{figure}

The bounded adversarial signal $\mc{A}(t,x_q)$ is plotted against time in the top plot of Figure~\ref{fig:sine-inputs}. It can also be seen in the remaining plots of Figure~\ref{fig:sine-inputs} that all the inputs remain bounded and well within the actuators' limit of $8$N. To watch the animation of this maneuver, click here.\footnote{\href{https://youtu.be/orp1g-pIFrM
}{https://youtu.be/orp1g-pIFrM}}

\begin{figure}[ht]
    \centering
    \setkeys{Gin}{width=0.475\linewidth}
\subfloat[Adversarial agent, input forces, torques, and thrust forces      \label{fig:sine-inputs}]{\includegraphics{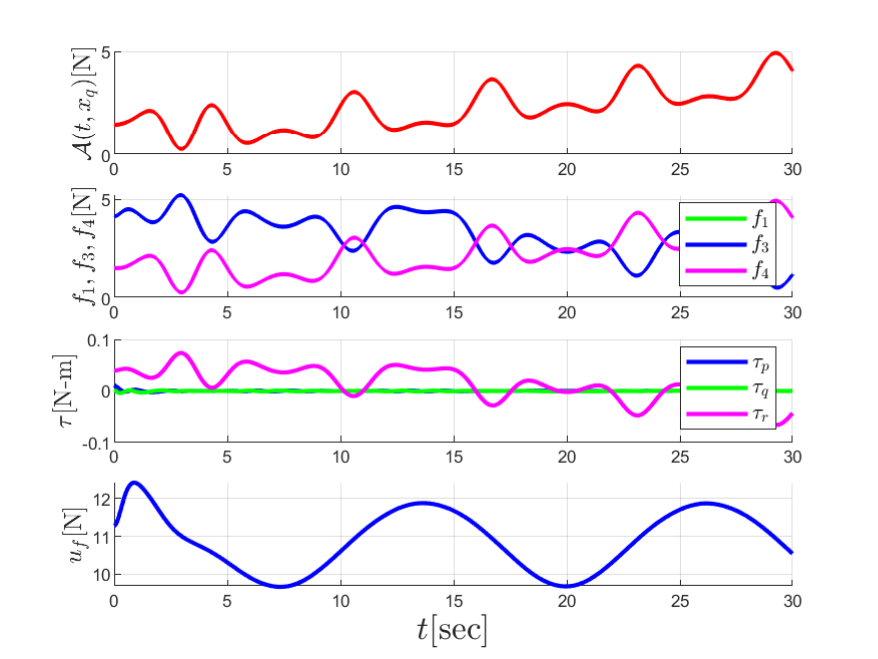} }\hfil
\subfloat[Quadrotor's velocities \label{fig:sine-velocities}]{\includegraphics{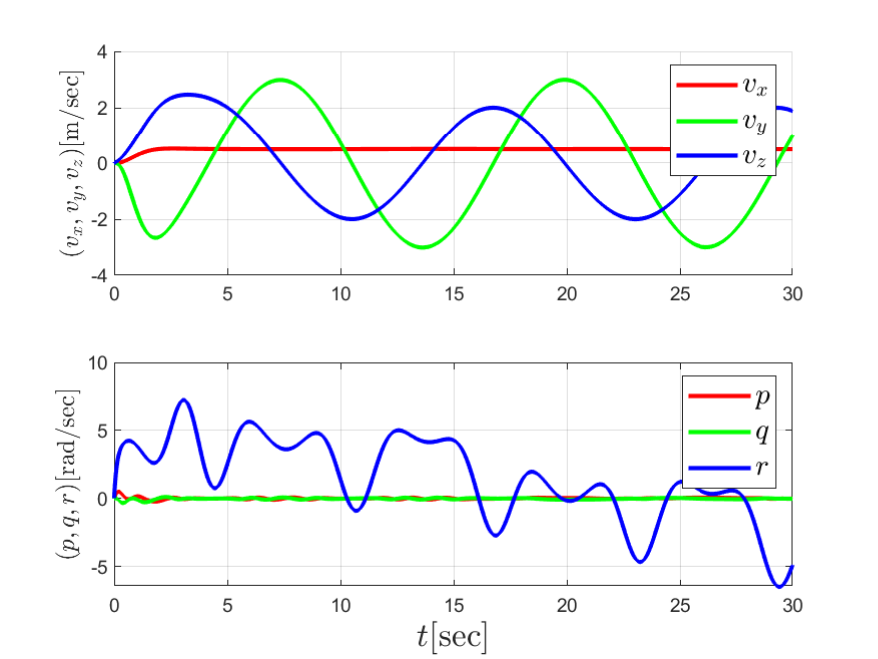} }

\caption{System's input and velocities}
\label{fig:first}
\end{figure}

Finally, Figure~\ref{fig:sine-velocities} shows the linear and angular velocities of the quadrotor, and it can be seen that since the adversarial agent is bounded, the body rates are also bounded.

In summary, the effectiveness of the proposed controller is shown through successful simulation. In the case of one rotor completely controlled by an adversarial agent, we have shown that even in the case of parametric uncertainties, the system can follow the given path accurately.

}

\bibliographystyle{IEEEtran}
\bibliography{myreferences}

\end{document}